\newtheorem{theorem}{Theorem}
\newtheorem{lemma}[theorem]{Lemma}
\newtheorem{corollary}{Corollary}[theorem]
\newtheorem{remark}{Remark}
\newcommand{\qed}{\ifhmode\unskip\nobreak\fi\ifmmode\ifinner
\else\hskip5 pt\fi\fi \hbox{\hskip5 pt
\vrule width4 pt  height6 pt  depth1.5 pt \hskip 1pt }}
\title{Fiedler vector analysis for particular cases of connected graphs} \author{ Daniel Felisberto Tracin\'a Filho}
\author{Daniel Felisberto Tracin\'a Filho\\
P\'os-gradua\c c\~ao em Sistemas e Computa\c c\~ao - Instituto Militar de Engenharia\\ 
danieltracina@ime.eb.br
\and
Claudia Marcela Justel\\
P\'os-gradua\c c\~ao em Sistemas e Computa\c c\~ao - Instituto Militar de Engenharia \\
cjustel@ime.eb.br
}
\date{\ }
\begin{document}
\maketitle

\begin{abstract} 
In this paper, some subclasses of block graphs are considered in order to analyze Fiedler vector of its members.
Two families of block graphs with cliques of fixed size, the block-path and block-starlike graphs, are analyzed. 
Cases A and B of classification for both families were considered, as well as the behavior of the algebraic connectivity when some vertices and edges are added for particular cases of block-path graphs.  

\end{abstract}

Keyword: spectral graph theory, Fiedler vector, algebraic connectivity, block graphs.

\section{Introduction}\label{intro}

We consider  $G=(V,E)$ an undirected, unweighted and simple graph. The sizes of its sets of vertices and edges are $|V| = n$, $|E|=m$. 
The Laplacian matrix of a graph $G$, $L(G)$, is the symmetric and semidefinite positive matrix $D(G)-A(G)$, where $D(G)$ is the diagonal matrix with the degrees of the vertices of $G$ and $A(G)$ is the adjacency matrix of $G$. 
The eigenvalues of $L(G)$, the roots of the characteristic polynomial of $L(G)$, are $n$ non-negative real numbers, and zero is always an eigenvalue of $L(G)$. We denote them as $0 = \lambda_1 \leq \lambda_2 \leq .... \leq \lambda_n$. The second smallest eigenvalue of $L(G)$, $\lambda_2(G)$, is called the algebraic connectivity of $G$. An eigenvector associated with the algebraic connectivity is called a Fiedler vector.
The algebraic connectivity gives some measure of how connected the graph is. Fiedler (\cite{F73}), showed that $\lambda_2(G)  > 0$ if and only if $G$ is connected. Moreover,  the algebraic connectivity of $G$ does not decrease if an edge is added to $G$ and the complete graph has $\lambda_2(K_n) =n$. Also, Fiedler proved that $\lambda_2(G) \leq \kappa(G)$, where $\kappa(G)$ is the vertex connectivity of $G$. 
For a connected graph $G$ and a Fiedler vector $\mathbf{y}$, the $i-th$ entry of $\mathbf{y}$, $y_i$, gives a valuation of vertex $i$. 
When the graph is a tree (acyclic and connected graph), the entries of a Fiedler vector of such tree provide extra information. A tree $T$ is called a Type 1 tree if there is a vertex $z$ such that $y_z = 0$ and is adjacent to a vertex with non zero valuation. In this case $z$ is called characteristic vertex. When there is no entry of $\mathbf{y}$ equal to zero, the tree has an edge $(u,w)$ with $y_u > 0$ and $y_w <0$, and in that case $T$ is called a Type 2 tree and the edge $(u,w)$ the characteristic edge (\cite{GM87}).

Fiedler vectors and algebraic connectivity of graphs have been studied by several authors. For instance some results about trees are presented in \cite{KNS96} and  \cite{P07}, others works such as \cite{KF98} and \cite{KRT15} deal with connected graphs. 
The surveys in \cite{A07} and \cite{K06} provide overviews of the literature on algebraic connectivity.

Even for trees, to identify classes of connected graphs for which its elements are of the same type is not an easy task. For instance, path graphs  $P_n$ are classified by type according to their parity. 
Also, in \cite{P07} Patra shows conditions for a broom tree by of Type 2 and claims that all the broom trees are of Type 2, he also says that will be nice to get a proof of this claim. 

This paper consider two families of connected graphs, both of them belonging to the block graph class, the block-path and the block-starlike graphs. Some properties of Fiedler vectors for graphs in those families are analyzed and conditions for classifying by type such graphs are given.

The paper is organized as follows. In Section~\ref{preliminaries}, we review some basic concepts and results of the literature. Section~\ref{classes} presents the families  block-path and block-starlike graphs, as well as some properties about them. Finally, the conclusions  of the work are presented in Section~\ref{conclusions} and the references in the last section.  

\section{Preliminaries}\label{preliminaries}

In this section, some concepts and results about Graph Theory and Spectral Graph Theory are introduced.

First, let $G=(V,E)$ be a connected graph. The neighborhood of a vertex  $v \in V$ is denoted by $N(v) = \{w \in
V :  (v, w) \in  E\}$ and its closed neighborhood by $N[v] = N(v)  \cup \{v\}$. Two vertices $u, v \in V$ are true twins if $N[u] = N[v]$.

The vertex connectivity of $G$, denoted $\kappa(G)$, is  the minimum number of vertices whose removal from $G$ leaves a disconnected or trivial graph. 

A vertex $v \in V$ is a point of articulation or cutpoint if $G  \backslash v$ is disconnected, and in this case $\kappa(G)=1$. 
A maximal connected subgraph without any points of articulation is called a block. 
The distance, $d_G(u, v)$, in $G$ between two vertices is the length of a shortest path between $u, v$ in $G$. 
%The greatest distance between any two vertices in $G$ is the diameter of $G$, $diam(G)$. 
The eccentricity, $e_G(v)$, of a vertex $v$ in $G$ is its greatest distance from any other vertex. The center of $G$, denoted $center(G)$, is the set of vertices with minimum eccentricity. 
For any $S \subseteq V$, the subgraph of $G$ induced
by $S$ is denoted $G[S]$. If $G[S]$ is a complete subgraph then $S$ is a clique in $G$.

A tree is an acyclic and connected graph. We denote by $P_j$ the path on $j$ vertices and $K_{1,q}$ the star on $q+1$ vertices. 

A tree $T$ is called {\bf starlike} if $T$ is homeomorphic to a star $K_{1,m}$. For $m \geq 3$, 
$T$ has a unique vertex $v$ of degree $m$ and $T  \backslash v$ is a union of $m$ paths (\cite{WS79}).\newline

%{\color{red} TIRAR ?? Given $n \geq q \geq 2$, the {\bf broom} tree with $n$ vertices $T_{n,q}$ is the graph resulting from the %coalescence of one extreme point of $P_{n-q}$ with the center of the star $K_{1,q}$. Broom trees were introduced in [3].} %\newline

A graph is a {\bf  block graph} if every block is a clique. 
A proper interval graph is an interval graph that has an intersection model in which no interval properly contains another. Proper interval graphs are also known as indifference graphs. 
A graph is a block indifference graph when it belongs to the intersection of both classes, indifference and block graphs \cite{BS99}.\newline

Next, some concepts and properties on graph spectra are introduced. The following results consider $G=(V,E)$ a weighted graph with $w(e) \geq 0$ for all $e \in E$ (the weight of edge), and define the Laplacian matrix of $G$, $L(G)$ whose entry $(i,j)$ equals $-w(i,j)$ if $(i,j) \in E$;  zero if $i \not = j$ and $(i,j) \not \in E$; and the sum of weights of edges incident to vertex $i$, if $i=j$.\newline

In \cite{F75} Fiedler proved the following theorem which describes some structure of any Fiedler vector of a connected graph.\newline

\begin{theorem}\label{AB}
(\cite{F75}) Let $G$ be a connected graph and $\mathbf{y}$ a Fiedler vector of $G$. Then exactly one of the following cases occurs: \newline

Case A: There is a single block $C$ in $G$ which contains both positively and negatively valuated vertices. Each other block has either vertices with positive valuation only, vertices with negative valuation only, or vertices with zero valuation only. Every path $P$ which contains at most two points of articulation in each block, which starts in $C$ and contains just one vertex $k$ in $C$ has the property that the valuations at points of articulation contained in $P$ form either an increasing, or decreasing or a zero sequence among this path according to whether $y_k > 0$, $y_k < 0$ or $y_k = 0$; in the last case all vertices in $P$ have valuation zero.

Case B: No block of $G$ contains both positively and negatively valuated vertices. There exists a unique vertex $z$ which has valuation zero and is adjacent to a vertex with a non-zero valuation. This vertex $z$ is a point of articulation. Each block contains either vertices with positive valuation only, with negative valuation only, or with zero valuation only. Every path containing both positively and negatively valuated vertices passes through $z$.\newline
\end{theorem}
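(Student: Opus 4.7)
The plan is to use the eigenvalue equation for $\mathbf{y}$ together with the Courant--Fischer variational characterization of $\lambda_2(G)$ and the block-cutpoint tree structure of $G$. I would begin by writing the pointwise identity
$$(d(v)-\lambda_2)\,y_v \;=\; \sum_{u\sim v} y_u\qquad(v\in V),$$
and noting that since $\lambda_2>0$ and $\mathbf{y}\perp\mathbf{1}$, both the positive set $V^+=\{v:y_v>0\}$ and the negative set $V^-=\{v:y_v<0\}$ are non-empty. The dichotomy (Case A vs.\ Case B) will then be driven by whether or not some block contains vertices from both $V^+$ and $V^-$.

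Second, I would prove the uniqueness of such a ``sign-changing block''. Suppose for contradiction that two distinct blocks $C_1,C_2$ both contain vertices of both signs. Using the block-cutpoint tree, I would construct two test vectors $\mathbf{z}_1,\mathbf{z}_2$ obtained by restricting $\mathbf{y}$ to certain pendant ``branches'' of the block tree, zeroed out elsewhere, and compare their Rayleigh quotients to $\lambda_2$. A suitable linear combination $\alpha\mathbf{z}_1+\beta\mathbf{z}_2$ can be made orthogonal to $\mathbf{1}$, disjointly supported, and linearly independent from $\mathbf{y}$, with Rayleigh quotient $\leq\lambda_2$; this enlarges the $\lambda_2$-eigenspace illegitimately (or otherwise contradicts minimality), giving the contradiction. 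This pins us in either Case~A (exactly one sign-changing block $C$) or Case~B (no sign-changing block).

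Third, for Case A I would prove the monotonicity-of-articulation-point-valuations claim. Fix $C$ and a path $P=k=v_0,v_1,\dots,v_\ell$ of articulation points starting at $v_0=k\in C$, with each consecutive pair lying in a common block. For the branch $B_i$ hanging off $v_{i-1}$ that contains $v_i,\dots,v_\ell$, I would consider the principal submatrix of $L(G)$ indexed by $B_i\setminus\{v_{i-1}\}$, a positive-definite $M$-matrix. Extending the restriction $\mathbf{y}|_{B_i\setminus\{v_{i-1}\}}$ by zero produces an admissible test vector; a Perron--Frobenius/positivity argument on the inverse of this submatrix shows that if $y_{v_{i-1}}>0$ then all entries of $\mathbf{y}$ on the branch have the same sign as $y_{v_{i-1}}$, and moreover $y_{v_i}\geq y_{v_{i-1}}$ (strictly unless all entries are zero), yielding the increasing sequence. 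The case $y_k<0$ follows by replacing $\mathbf{y}$ by $-\mathbf{y}$; the case $y_k=0$ follows because the submatrix is nonsingular and its only solution with zero boundary data is zero.

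Finally, for Case B I would argue as follows. Since no block contains both signs, every edge lies inside $V^+\cup V^0$ or inside $V^-\cup V^0$, so any path from $V^+$ to $V^-$ meets $V^0$, and more specifically must cross a cutpoint $z\in V^0$ adjacent to vertices of opposite signs in different blocks around $z$. Uniqueness of $z$ follows from the fact that the ``sign-labelled'' block tree has a unique separator between the $V^+$-blocks and $V^-$-blocks (two such separators would again let us build two independent low-Rayleigh-quotient vectors as in step two). The adjacency condition comes from the eigenvalue equation at $z$: if every neighbour of $z$ were in $V^0$, then by the Case~A monotonicity argument applied to each branch, the branches would be entirely zero, contradicting that $V^+,V^-$ are non-empty. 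The main obstacle will be step two: carefully constructing the two competing test vectors, verifying that their Rayleigh quotients really are at most $\lambda_2$, and ruling out degenerate cases where the construction collapses. The Perron-type reasoning in step three is also delicate because one must simultaneously control the sign and the monotone direction from a single principal-submatrix inequality.
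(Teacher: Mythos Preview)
The paper does not prove this theorem. Theorem~\ref{AB} appears in Section~\ref{preliminaries} as a quoted result attributed to Fiedler~\cite{F75}, with no accompanying proof; the authors invoke it (together with the Kirkland--Fallat characterization in Theorem~\ref{kirkB} and Corollary~\ref{kirkAB}) purely as background for classifying block-path and block-starlike graphs. There is therefore no in-paper argument against which to compare your proposal.

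As a standalone sketch your outline is reasonable in spirit --- the eigenvalue identity at each vertex, a Rayleigh-quotient obstruction to having two sign-changing blocks, and a Perron--Frobenius/positivity argument on principal submatrices for the monotonicity along cutpoints --- but step two is genuinely incomplete as written. Producing two disjointly-supported vectors with Rayleigh quotient at most $\lambda_2$ that are simultaneously orthogonal to $\mathbf{1}$ and linearly independent of $\mathbf{y}$ is delicate, and ``enlarging the $\lambda_2$-eigenspace illegitimately'' is not by itself a contradiction, since $\lambda_2$ can legitimately have high multiplicity. If you pursue this you would be reproving Fiedler's 1975 result rather than anything the present paper establishes.
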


Kirkland et al (\cite{KF98}) presented another characterization of graphs for which Case B of Theorem \ref{AB} holds.  
This characterization is based on the concept of Perron components at points of articulation. \newline
For a positive $n \times n$ matrix $M$, the Perron value of $M$, $\rho(M)$ is defined as the maximum eigenvalue of the matrix.
Let $G$ be a connected weighted graph with Laplacian matrix $L(G)$. For a vertex $v$, we denote the connected components of $G  \backslash v$ by  $C_1, ..., C_p$. If $p \geq 2$, $v$ is a point of articulation. For each connected component $C_i$, $1 \leq i \leq p$ let $L(C_i)$ be the principal submatrix of $L(G)$ corresponding to the vertices of $C_i$. The Perron value of $C_i$ is the Perron value of the positive matrix  $L(C_i)^{-1}$, denoted by $\rho (L(C_i)^{-1})$. We say that $C_j$ is the Perron component at $v$ if 
$\rho (L(C_j)^{-1}) = \max_{1 \leq i \leq p}\rho (L(C_i)^{-1})$.\newline

\begin{theorem}\label{kirkB}
(\cite{KF98}) Let $G=(V,E)$ be a weighted graph. Case B of Theorem \ref{AB} holds if and only if there are two or more Perron components at $z$. Further, in that case, the algebraic connectivity $\lambda_2(G)$ is given by $\frac{1}{\rho (L(C)^{-1})}$ for any Perron component $C$ at $z$. If the Perron components at $z$ are $C_1, ..., C_m$, for $1 \leq i \leq m$, let the Perron vector of $L(C_i)^{-1}$ be $x_i$, normalized so that its entries sum to 1. For each $2 \leq i \leq m$, let $b_{i-1}$ be the Fiedler vector which valuates the vertices of $C_1$ by $x_1$, the vertices of $C_i$ by       $-x_i$ and all other vertices zero; then $b_1, ...,  b_{m-1}$ is a basis for the eigenspace corresponding to  $\lambda_2(G)$. In particular, if there are $m$ Perron components at $z$, then the multiplicity of   $\lambda_2(G)$ is $m-1$, and every Fiedler vector has zeros in the positions corresponding to $z$, and to the vertices of the non-Perron components at $z$. Finally, for any vertex $v \not = z$, the unique Perron component at $v$ is the component containing $z$.
\end{theorem}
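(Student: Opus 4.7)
The plan is to combine Perron--Frobenius theory applied to the positive matrices $L(C_i)^{-1}$ with a Cauchy interlacing argument for the principal submatrix of $L(G)$ obtained by deleting the row and column of $z$. For the forward direction, suppose Case B holds at an articulation vertex $z$ with Fiedler vector $\mathbf{y}$ and $y_z=0$. Let $C_1,\ldots,C_p$ be the connected components of $G\setminus z$. Because $y_z=0$, restricting $L(G)\mathbf{y}=\lambda_2\mathbf{y}$ to the rows indexed by $C_i$ gives $L(C_i)\mathbf{y}|_{C_i}=\lambda_2\mathbf{y}|_{C_i}$, so $\mathbf{y}|_{C_i}$ is an eigenvector of $L(C_i)^{-1}$ with eigenvalue $1/\lambda_2$. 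On each $C_i$ where $\mathbf{y}|_{C_i}\neq 0$, Case B forces sign-constancy, so by Perron--Frobenius $\mathbf{y}|_{C_i}$ must be a scalar multiple of the Perron vector $x_i$ and $1/\lambda_2=\rho(L(C_i)^{-1})$. Since $\mathbf{y}$ takes both signs, at least two such $C_i$ exist, yielding two Perron components at $z$.

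For the converse, I would construct each $b_{i-1}$ as prescribed and verify it is an eigenvector of $L(G)$ at $1/\rho$, where $\rho=\rho(L(C_1)^{-1})$. On the vertices of $C_1$ and $C_i$ this follows at once from $L(C_j)x_j=(1/\rho)x_j$. The delicate check is at $z$, where $(L(G)b_{i-1})_z=0$ reduces to showing $\sum_{w\in C_j}L(G)_{zw}(x_j)_w=-1/\rho$ for each Perron component $C_j$ with $x_j$ normalized so that $\mathbf{1}^T x_j=1$. That identity follows from $L(G)\mathbf{1}=0$: reading off the rows indexed by $C_j$ gives $L(C_j)\mathbf{1}_{C_j}=-c_j$, where $c_j$ is the vector of $z$-to-$C_j$ edge weights, and pairing with $x_j$ yields $\langle c_j,x_j\rangle=-\langle\mathbf{1}_{C_j},L(C_j)x_j\rangle=-1/\rho$. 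To pin down $1/\rho=\lambda_2(G)$, apply Cauchy interlacing to the principal submatrix of $L(G)$ on $V\setminus\{z\}$: it is block diagonal with blocks $L(C_i)$, so its smallest eigenvalue equals $1/\rho$, and interlacing delivers $0=\lambda_1(L(G))\le 1/\rho\le \lambda_2(L(G))$. Since the $b_{i-1}$ are orthogonal to $\mathbf{1}$ by construction, $1/\rho$ is a nonzero Laplacian eigenvalue and equality must hold.

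For the characterization of all Fiedler vectors, any eigenvector $\mathbf{y}$ at $\lambda_2=1/\rho$ satisfies $(L(C_i)-\lambda_2 I)\mathbf{y}|_{C_i}=-c_i y_z$; pairing with $x_i$ on a Perron component and using $\langle c_i,x_i\rangle=-1/\rho\neq 0$ forces $y_z=0$. Then non-Perron blocks yield $\mathbf{y}|_{C_i}=0$ and Perron blocks yield $\mathbf{y}|_{C_i}=\alpha_i x_i$, with $\sum_i\alpha_i=0$ from orthogonality to $\mathbf{1}$, carving out an $(m-1)$-dimensional eigenspace of which the prescribed $b_{i-1}$ are an evident basis. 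For any $v\neq z$, Case B cannot hold at $v$ (Case B singles out a unique vertex $z$), so by the equivalence just established $v$ admits exactly one Perron component; a monotonicity argument, that a strictly larger connected subgraph has a strictly larger Perron value for the inverse of its Laplacian principal block, identifies that unique Perron component as the one containing $z$, since it absorbs all other components of $G\setminus z$. I expect the main obstacle to be the identity at $z$ used to glue Perron vectors of different components into a single eigenvector of $L(G)$, together with the interlacing step that forces the resulting eigenvalue to coincide with $\lambda_2$ rather than merely being some positive Laplacian eigenvalue.
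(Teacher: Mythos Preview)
The paper does not prove this theorem; it is quoted verbatim as a known result from Kirkland and Fallat \cite{KF98} and then used as a black box in the later arguments (Theorems \ref{pathB}, \ref{starlikeA}, \ref{pathGM}). There is therefore no proof in the paper to compare your proposal against.

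For what it is worth, your outline is essentially the standard Perron--Frobenius plus Cauchy interlacing route and is sound. Two harmless sign slips: since $L(C_j)$ is the principal submatrix of $L(G)$ (so carries the full $G$-degrees on its diagonal), restricting $L(G)\mathbf{1}=0$ to the rows in $C_j$ gives $L(C_j)\mathbf{1}_{C_j}=c_j$, not $-c_j$, whence $\langle c_j,x_j\rangle=1/\rho$; the cancellation at row $z$ still works because the off-diagonal entries $L(G)_{zw}$ supply the minus sign. Likewise the restricted eigenvalue equation reads $(L(C_i)-\lambda_2 I)\,\mathbf{y}|_{C_i}=c_i\,y_z$. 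With these corrections your argument goes through; the only place that deserves a fuller justification in a write-up is the monotonicity step identifying the Perron component at $v\neq z$ as the one containing $z$.
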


In other words, Theorem \ref{kirkB} can be rewritten as in Corollary \ref{kirkAB}.\newline

\begin{corollary}\label{kirkAB}
(\cite{KF98}) Let $G=(V,E)$ be a weighted graph. Case A of Theorem \ref{AB} holds if and only if there is a unique Perron component at every vertex of $G$. Case B of Theorem \ref{AB} holds if and only if there is a unique vertex at which there are two or more Perron components. 
\end{corollary}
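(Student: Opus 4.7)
The plan is to derive the corollary as an essentially verbatim reformulation of Theorem \ref{kirkB}, leveraging the dichotomy of Theorem \ref{AB} (every Fiedler vector fits into exactly one of Case A or Case B) to convert the Case B characterization into the complementary Case A characterization. So the bulk of the work is already packaged in the two theorems; what remains is mostly bookkeeping and extracting the right pieces from Theorem \ref{kirkB}.

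First I would handle the Case B equivalence. If Case B of Theorem \ref{AB} holds, then by the forward direction of Theorem \ref{kirkB} there is a vertex $z$ (the zero-valuated articulation point whose existence is asserted in Case B) at which two or more Perron components exist. The uniqueness of such a vertex is also built into Theorem \ref{kirkB}: its final sentence states that for every vertex $v \neq z$, the Perron component at $v$ is unique (it is precisely the component containing $z$). Hence no other vertex of $G$ can host multiple Perron components. Conversely, if some vertex $v$ of $G$ has two or more Perron components, then the converse part of Theorem \ref{kirkB} forces Case B of Theorem \ref{AB} to hold; uniqueness of $v$ then follows again from the last sentence of Theorem \ref{kirkB}.

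Next I would obtain the Case A equivalence by contrapositive, using the mutual exclusivity of the two cases in Theorem \ref{AB}. Case A holds if and only if Case B fails, and by the previous paragraph Case B fails if and only if there is no vertex at which two or more Perron components exist, i.e.\ every vertex of $G$ has a unique Perron component. This finishes the equivalence.

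The main thing to be careful about is not a technical obstacle but a logical one: one must be sure to invoke both directions of Theorem \ref{kirkB} and also the uniqueness clause hidden in its last sentence, rather than only its explicit ``iff'' statement. Apart from that, no additional spectral computation is needed, since Theorem \ref{kirkB} already supplies the structural content (existence, uniqueness of $z$, and the characterization of Perron components at other vertices) that the corollary merely rearranges.
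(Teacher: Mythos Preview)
Your proposal is correct and matches the paper's approach: the paper itself offers no proof beyond the sentence ``In other words, Theorem \ref{kirkB} can be rewritten as in Corollary \ref{kirkAB},'' and your argument simply spells out that rewriting by combining the dichotomy in Theorem \ref{AB} with both directions and the uniqueness clause of Theorem \ref{kirkB}.
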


%\begin{theorem}\label{perroncomp}
%(\cite{KF98}) Suppose that $G$ is a weghted graph with algebraic connectivity $\lambda_2(G)$, and that Case A of Theorem \ref{AB} holds. Let %$\mathbf{y}$  be a Fiedler vector, and $B_0$ be the unique block of $G$ containing both positively and negatively valuated vertices in %$\mathbf{y}$. If we have a point of articulation $v$ of $G$, let $C_0$ denote the set of vertices in the connected component of $G  \backslash %v$ which contains vertices in $B_0$ and let $C_1$ denote the vertices in $G  \backslash C_0$. Permute and partition the Laplacian matrix %as\newline
%
%\begin{center}
%    $\left[ \begin{array}{cc}
%    L(C_1) & A^T \\
%    A & L(C_0)
%    \end{array} \right]$, com 
%    $A = \left[ \begin{array}{cccc}
%    0 & \ldots & 0 & \\
%    \vdots &  & \vdots & -\theta \\ 
%    0 & \ldots & 0 &
%    \end{array} \right]$
%\end{center}
%
%(here vertex $v$ correspond to the last row of $L(C_1)$), and partition $\mathbf{y}$ as $[\mathbf{y_0}, \mathbf{y_1}]^T$. If 
%$y(v) \not = 0$, then
% \begin{center}
%     $L(C_1)^{-1} + \frac{\theta ^Ty_0}{\theta ^T1(\theta ^T 1 y(v) - \theta ^T y_0)}J$
% \end{center}
%is a positive matrix whose Perron value is $\frac{1}{\lambda_2(G)}$ and whose Perron vector is a scalar multiple of $\mathbf{y_1}$. Further, at %every vertex $x$ of $G$, the unique Perron component at $x$ is the component containing vertices in $B_0$.
%\end{theorem}

Observe that for unweighted graphs, we consider each of its edges weight equals 1 in order to apply  Theorem \ref{kirkB} and Corollary \ref{kirkAB}.

\section{{\bf Block-path} and  {\bf block-starlike} classes}\label{classes}

The {\bf block-path} class, denoted  $\mathcal{B}$, is a subclass of block-indifference graphs, with elements denoted $G_{k,p}$, that contain all cliques of size $k \geq 2$ and $p \geq 1$ points of articulation. Moreover, the number of vertices in $G_{k,p}$ is $n = k(p + 1) - p$. For $p=0$, $G_{k,0} = K_k$. The block-path graph $G_{4,3}$ is presented in Figure~\ref{fig:blockpath}.\newline

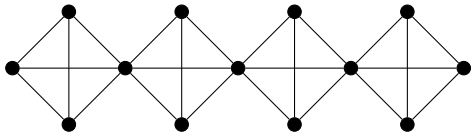
\begin{figure}[H]
\begin{center}
	\begin{tikzpicture}
  [scale=.5,auto=left]
 \tikzstyle{bi}=[
circle, draw, fill=black, 
                        inner sep=1.8pt, minimum width=4pt]
  \tikzstyle{ca}=[
circle, draw,
                        inner sep=1.8pt, minimum width=4pt]
		%%%%%%%%%%%%%%%%  C
	% linha base 
	\node  [bi]  (1) at (5,0) {}; 
	\node  [bi]  (4) at (8,0)  {};
	\node  [bi]  (8) at (11,0) {};
	\node  [bi] (13) at (14,0) {};
	\node  [bi]  (19) at (17,0)  {};
	%   linha inf 
	\node   [bi] (3) at (6.5,-1.5) {};
        \node [bi] (20) at (9.5,-1.5){};
        \node [bi] (21) at (12.5,-1.5){};
        \node [bi] (22) at (15.5,-1.5){};
        %    linha sup 
        \node  [bi] (2) at (6.5,1.5)  {};  
        \node  [bi] (5) at (9.5,1.5)   {}; 
        \node  [bi] (9) at (12.5,1.5)   {}; 
       \node  [bi]  (14) at (15.5,1.5)   {}; 
 \foreach \from/\to in {1/2,1/3,1/4,2/3,2/4,3/4,4/5,5/8,4/8,8/13,13/19,13/14,14/19,20/4,20/5,20/8,21/9,21/8,21/13,22/14,22/13,22/19,9/8,9/13}	
  \draw (\from) -- (\to);
		\end{tikzpicture}
	\caption{$G_{4,3} \in \mathcal{B}$} 
	\label{fig:blockpath}   
	\end{center}
\end{figure}

Let $G = S_{r,k,p_1,...,p_r} \in \mathcal{A}$ with $r \geq 3$ and $v$ its central vertex such that the connected components of $G  \backslash v$ are denoted by 
 $G_{k,p_1} \backslash v$, $G_{k,p_2} \backslash v$, $\ldots$, $G_{k,p_r}  \backslash v$, with $p_1 \geq p_2 \geq \ldots \geq p_r$. $G$ is a {\bf block-starlike}. The class of block-starlike graphs is denoted $\mathcal{A}$.
 
Figure~\ref{fig:blockstarlike} shows $S_{3,4,1,2,3} \in \mathcal{A}$ with $v$ its central vertex and $S_{3,4,1,2,3} \backslash v =  G_{4,1}\backslash v \cup G_{4,2} \backslash v \cup G_{4,3}\backslash v$. Observe that in this example, the central vertex $v$ do not belong to  $center(S_{3,4,1,2,3})$.\newline

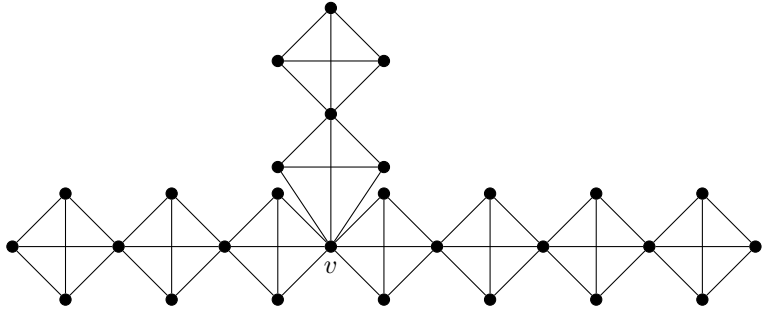
\begin{figure}[H]
	\centering
	\begin{tikzpicture} [scale=2.0,auto=left]
                       %[>=latex',join=bevel,scale=2pt]
	\tikzstyle{selected edge} = [draw,line width=1pt,->,red!30]
        \node (a) at (-50bp,10bp) [draw,circle,inner sep=1.5pt,fill=black!100,label=90:{}] {};
	\node (b) at (-50bp,-10bp) [draw,circle,inner sep=1.5pt,fill=black!100,label=90:{}] {};
	\node (c) at (-60bp,0bp) [draw,circle,inner sep=1.5pt,fill=black!100,label=90:{}] {};
	\node (1) at (-40bp,0bp) [draw,circle,inner sep=1.5pt,fill=black!100,label=90:{}] {};
	\node (2) at (-30bp,10bp) [draw,circle,inner sep=1.5pt,fill=black!100,label=90:{}] {};
	\node (3) at (-30bp,-10bp) [draw,circle,inner sep=1.5pt,fill=black!100,label=90:{}] {};
	\node (4) at (-20bp,0bp) [draw,circle,inner sep=1.5pt,fill=black!100,label=270:{}] {};
	\node (5) at (-10bp,10bp) [draw,circle,inner sep=1.5pt,fill=black!100,label=270:{}] {};
	\node (6) at (-10bp,-10bp) [draw,circle,inner sep=1.5pt,fill=black!100,label=90:{}] {};
	\node (7) at (0bp,45bp) [draw,circle,inner sep=1.5pt,fill=black!100,label=90:{}] {};
	\node (8) at (-10bp,35bp) [draw,circle,inner sep=1.5pt,fill=black!100,label=90:{}] {};
	\node (9) at (10bp,35bp) [draw,circle,inner sep=1.5pt,fill=black!100,label=270:{}] {};
	\node (10) at (0bp,25bp) [draw,circle,inner sep=1.5pt,fill=black!100,label=270:{}] {};
	\node (11) at (-10bp,15bp) [draw,circle,inner sep=1.5pt,fill=black!100,label=90:{}] {};
	\node (12) at (10bp,15bp) [draw,circle,inner sep=1.5pt,fill=black!100,label=90:{}] {};
	\node (13) at (40bp,0bp) [draw,circle,inner sep=1.5pt,fill=black!100,label=90:{}] {};
	\node (14) at (30bp,10bp) [draw,circle,inner sep=1.5pt,fill=black!100,label=270:{}] {};
	\node (15) at (30bp,-10bp) [draw,circle,inner sep=1.5pt,fill=black!100,label=270:{}] {};
	\node (16) at (20bp,0bp) [draw,circle,inner sep=1.5pt,fill=black!100,label=90:{}] {};
	\node (17) at (10bp,-10bp) [draw,circle,inner sep=1.5pt,fill=black!100,label=90:{}] {};
	\node (18) at (10bp,10bp) [draw,circle,inner sep=1.5pt,fill=black!100,label=90:{}] {};
	\node (19) at (0bp,0bp) [draw,circle,inner sep=1.5pt,fill=black!100,label=270:{$v$}] {};
        \node (20) at (50bp,10bp) [draw,circle,inner sep=1.5pt,fill=black!100,label=90:{}] {};
	\node (21) at (50bp,-10bp) [draw,circle,inner sep=1.5pt,fill=black!100,label=90:{}] {};
	\node (22) at (60bp,0bp) [draw,circle,inner sep=1.5pt,fill=black!100,label=270:{}] {};
        \node (23) at (70bp,10bp) [draw,circle,inner sep=1.5pt,fill=black!100,label=90:{}] {};
	\node (24) at (70bp,-10bp) [draw,circle,inner sep=1.5pt,fill=black!100,label=90:{}] {};
	\node (25) at (80bp,0bp) [draw,circle,inner sep=1.5pt,fill=black!100,label=270:{}] {};

% arestas
	\draw [] (1) -- node {} (a);
	\draw [] (1) -- node {} (b);
	\draw [] (1) -- node {} (c);
	\draw [] (a) -- node {} (b);
	\draw [] (a) -- node {} (c);
	\draw [] (b) -- node {} (c);

	\draw [] (1) -- node {} (2);
	\draw [] (2) -- node {} (3);
	\draw [] (3) -- node {} (4);
	\draw [] (4) -- node {} (1);
	\draw [] (4) -- node {} (2);
	\draw [] (1) -- node {} (3);
	
	\draw [] (4) -- node {} (5);
	\draw [] (5) -- node {} (6);
	\draw [] (6) -- node {} (19);
	\draw [] (19) -- node {} (4);
	\draw [] (19) -- node {} (5);
	\draw [] (4) -- node {} (6);
	
	\draw [] (7) -- node {} (8);
	\draw [] (8) -- node {} (10);
	\draw [] (10) -- node {} (9);
	\draw [] (9) -- node {} (7);
	\draw [] (7) -- node {} (10);
	\draw [] (8) -- node {} (9);
	
	\draw [] (10) -- node {} (11);
	\draw [] (11) -- node {} (19);
	\draw [] (12) -- node {} (19);
	\draw [] (10) -- node {} (12);
	\draw [] (11) -- node {} (12);
	\draw [] (10) -- node {} (19);
	
	\draw [] (13) -- node {} (14);
	\draw [] (14) -- node {} (16);
	\draw [] (16) -- node {} (15);
	\draw [] (15) -- node {} (13);
	\draw [] (14) -- node {} (15);
	\draw [] (16) -- node {} (13);
	
	\draw [] (16) -- node {} (17);
	\draw [] (17) -- node {} (19);
	\draw [] (19) -- node {} (18);
	\draw [] (18) -- node {} (16);
	\draw [] (16) -- node {} (19);
	\draw [] (17) -- node {} (18);

        \draw [] (13) -- node {} (20);
	\draw [] (13) -- node {} (21);
	\draw [] (13) -- node {} (22);
	\draw [] (20) -- node {} (21);
	\draw [] (20) -- node {} (22);
	\draw [] (21) -- node {} (22);

        \draw [] (22) -- node {} (23);
	\draw [] (22) -- node {} (24);
	\draw [] (22) -- node {} (25);
	\draw [] (23) -- node {} (24);
	\draw [] (23) -- node {} (25);
	\draw [] (24) -- node {} (25);
	
	\end{tikzpicture}
	\caption{$S_{3,4,1,2,3} \in \mathcal{A}$}
        \label{fig:blockstarlike}
\end{figure}

Next, we present some new results about classes $\mathcal{A}$ and $\mathcal{B}$. \newline

\begin{lemma}\label{twins}
Let $G$ be a block graph with a point of articulation. If $a, b$ are true twins in $G$ and $\mathbf{y}$ is a Fiedler vector, then $y_a = y_b$.
\end{lemma}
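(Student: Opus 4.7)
The plan is to invoke the Laplacian eigenvalue equation $L\mathbf{y}=\lambda_2\mathbf{y}$ at the two true twins $a$ and $b$ and then subtract. Write $d=\deg(a)$. The equation at $a$ reads
\[
d\,y_a - \sum_{j\in N(a)} y_j = \lambda_2\, y_a,
\]
and similarly at $b$. Since $a,b$ are true twins, $N[a]=N[b]$, so $\deg(a)=\deg(b)=d$, the two vertices are adjacent (each belongs to the closed neighborhood of the other), and $S:=N[a]\setminus\{a,b\}=N[b]\setminus\{a,b\}$. Substituting $N(a)=\{b\}\cup S$ and $N(b)=\{a\}\cup S$ into the two eigenvalue equations and subtracting, the contribution coming from $S$ cancels, and after a line of algebra one obtains
\[
(d+1-\lambda_2)\,(y_a-y_b)=0.
\]

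It remains to rule out the degenerate possibility $\lambda_2=d+1$. For this I would use Fiedler's classical bound $\lambda_2(G)\leq \kappa(G)$ recalled in the introduction. The hypothesis that $G$ has a point of articulation gives $\kappa(G)=1$, hence $\lambda_2(G)\leq 1$. On the other hand, since $a$ and $b$ are adjacent, $d\geq 1$, so $d+1\geq 2 > \lambda_2$. Therefore the factor $d+1-\lambda_2$ does not vanish and we conclude $y_a=y_b$.

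The computation is essentially mechanical; the one place where care is needed is the exclusion of $\lambda_2=d+1$, and this is precisely where the cut-vertex hypothesis enters. Note that the conclusion would fail without it: in $K_n$ every pair of vertices is a pair of true twins, but Fiedler vectors (any vector orthogonal to the all-ones vector) need not be constant on twins, consistent with $\lambda_2(K_n)=n=d+1$ in that case. So "point of articulation" is used exactly at this step, and the block-graph structure plays no further role in the argument.
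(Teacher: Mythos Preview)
Your proof is correct and follows essentially the same route as the paper: subtract the eigenvalue equations $L\mathbf{y}=\lambda_2\mathbf{y}$ at the two twins to obtain $(d+1-\lambda_2)(y_a-y_b)=0$, and then invoke $\lambda_2\le\kappa(G)=1<d+1$ from the cut-vertex hypothesis. Your additional remark on $K_n$ nicely illustrates why that hypothesis cannot be dropped.
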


\begin{proof}
Let  $|V| = n$, $a, b \in V$ be true twin vertices and the Laplacian matrix of $G$ denoted $L = (l_{i,j})$. As $\lambda_2(G)$ is a  Laplacian eigenvalue of $G$ associated with $\mathbf{y}$, $L \mathbf{y} = \lambda_2(G) \mathbf{y}$ and     
\begin{center}
$\lambda_2(G)(y_a- y_b) = \sum _{s=1} ^n l_{a,s}y_{s} - \sum _{s=1} ^n l_{b,s}y_{s}$
\end{center}

Vertices $a$ and $b$ are true twins then, for all $s \in V \backslash \{a,b\}$ $l_{a,s} = l_{b,s}$. So, 

\begin{center}
$\lambda_2(G)(y_a - y_b) = l_{a,a}y_{a} + l_{a,b}y_{b} - l_{b,a}y_{a} - l_{b,b}y_{b}$

$\lambda_2(G)(y_a - y_b) = (l_{a,a} - l_{b,a})y_{a} + (l_{a,b} - l_{b,b})y_{b}$
\end{center}

Since  $l_{a,a} = l_{b,b}$, 

\begin{center}
$\lambda_2(G)(y_a - y_b) = \lambda_2(G)(y_{a} - y_{b}) = q(y_{a}-y_{b})$, with $q = l_{a,a} - l_{b,a} = deg_G(a) + 1 > 1 \in N$.
\end{center}

However, by hypothesis the graph $G$ has a point of articulation and then $\lambda_2(G) \leq 1$.
So it is not possible to be $y_a \not = y_b$. Then, the result holds. 
\qed
\end{proof}

\begin{theorem}\label{pathB}
Let $G_{k,p} \in  \mathcal{B}$. Then $p$ is an odd number if and only if Case B of Theorem \ref{AB} holds for $G_{k,p}$.
\end{theorem}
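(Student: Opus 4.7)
The plan is to invoke Corollary~\ref{kirkAB}: Case~B is equivalent to the existence of a (necessarily unique) vertex carrying two or more Perron components, while Case~A is equivalent to uniqueness of the Perron component at every vertex. Label the blocks of $G_{k,p}$ as $B_0,\ldots,B_p$ with articulation points $v_i=B_{i-1}\cap B_i$ for $1\le i\le p$. Removing a non-articulation vertex leaves $G_{k,p}$ connected (its block merely shrinks to a $K_{k-1}$), so only the $v_i$'s are relevant. Removing $v_i$ produces exactly two components $L_i$ and $R_i$, consisting of $i$ and $p-i+1$ ``clique pieces'' of the chain respectively (in each, the piece that once contained $v_i$ is now a $K_{k-1}$ and the rest are $K_k$'s).

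\textbf{Sufficiency.} Suppose $p$ is odd and put $m=(p+1)/2$. The reversal of the chain is an automorphism of $G_{k,p}$ that fixes $v_m$ and maps $V(L_m)$ bijectively onto $V(R_m)$. The induced permutation matrix $P$ satisfies $P^\top L(L_m)P=L(R_m)$, so $\rho(L(L_m)^{-1})=\rho(L(R_m)^{-1})$ and both components are Perron components at $v_m$. Theorem~\ref{kirkB} then gives Case~B.

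\textbf{Necessity.} I reduce to the following monotonicity lemma: if chain components $C$ and $C'$ of the above form satisfy $|C'|>|C|$ (measured in number of blocks), then $\rho(L(C')^{-1})>\rho(L(C)^{-1})$. Assuming this, when $p$ is even $p+1$ is odd, so at every $v_i$ the two sides have unequal lengths $i$ and $p-i+1$; the longer side is the unique Perron component at $v_i$, and Case~A follows from Corollary~\ref{kirkAB}. The same lemma also confirms uniqueness of the Perron component at every $v_i\ne v_m$ when $p$ is odd, so that Case~B is witnessed at $v_m$ alone.

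The main obstacle is the monotonicity lemma, which I would prove by a direct Rayleigh-quotient estimate. Writing $C'$ as $C$ extended by one new $K_k$ that shares a single vertex $v_0$ with the farthest block of $C$, one checks $L(C')[V(C)]=L(C)+(k-1)e_{v_0}e_{v_0}^\top$ (the only diagonal adjustment is at $v_0$, whose degree grows by $k-1$). Let $y>0$ be the Perron vector of $L(C)^{-1}$, so that $L(C)y=\lambda_{\min}(L(C))y$, and extend it to $\tilde y$ on $V(C')$ by assigning the common value $y_{v_0}$ to each of the $k-1$ new vertices. A short expansion using the $K_{k-1}$ structure on the new block causes all boundary contributions to cancel, yielding $\tilde y^\top L(C')\tilde y=\lambda_{\min}(L(C))\|y\|^2$, whereas $\|\tilde y\|^2=\|y\|^2+(k-1)y_{v_0}^2>\|y\|^2$ since $y_{v_0}>0$ by Perron--Frobenius. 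The Rayleigh quotient therefore gives $\lambda_{\min}(L(C'))<\lambda_{\min}(L(C))$, hence $\rho(L(C')^{-1})>\rho(L(C)^{-1})$, completing the plan.
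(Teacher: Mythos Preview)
Your argument is correct. The sufficiency direction (symmetry at the middle cut-vertex forces two equal Perron components, hence Case~B via Corollary~\ref{kirkAB}) is exactly what the paper does. For necessity, however, you go well beyond the paper: the paper's proof only treats odd~$p$ explicitly and simply asserts that the two halves at the central cut-vertex are Perron components ``by definition of the block-path class,'' never addressing why no cut-vertex admits two Perron components when $p$ is even. Your Rayleigh-quotient monotonicity lemma fills precisely this gap. The computation is clean and correct: with $\tilde y$ constant on the new $K_k$, every edge of the new block contributes zero to the quadratic form, so $\tilde y^\top L(C')\tilde y = y^\top L(C)y$ while $\|\tilde y\|^2$ strictly increases (positivity of $y_{v_0}$ is guaranteed because $L(C)^{-1}$ is entrywise positive, being the inverse of an irreducible symmetric $M$-matrix). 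Thus $\lambda_{\min}(L(C'))<\lambda_{\min}(L(C))$ strictly, and the longer side at each $v_i$ is the unique Perron component whenever the two sides have different lengths. In short: same route for one direction, and a genuinely more complete argument for the other.
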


\begin{proof}
Label the vertices of  $G_{k,p}$ from 1 to $n= k(p+1) - p$ in the following way: from one block with only 1 point of articulation, all the points of articulation are labelled by $k, 2k-1, 3k-2,..., (p+1) k - p$, and inside each block, the true twin vertices are labeled in order by block with numbers between the labels of the corresponding points of articulation of the block.
Then vertex $v*$ labeled $(\frac{p+1}{2} k - \frac{p+1}{2} + 1)$ is the unique vertex in $center(G_{k,p})$ with odd $p \geq 1$. So, by definition of the block-path class, the graph 
$G_{k,p} - v*$ has two Perron components and by Corollary \ref{kirkAB} it holds if and only if Case B of Theorem \ref{AB} is true for $G_{k,p}$. 
\qed
\end{proof}

\begin{remark}
By Theorem \ref{pathB} we can conclude that $G_{k,p} \in  \mathcal{B}$ with $p$ an even number if and only if Case A of Theorem \ref{AB} holds for  $G_{k,p}$.
\end{remark}

\begin{remark}
If $G = S_{r,k,p_1,...,p_r} \in \mathcal{A}$ with $r \geq 2$ and $p_1 = p_2 = ... = p_r$, then by Corollary \ref{kirkAB} Case B of Theorem \ref{AB} holds for $G$ and the center of the graph equals the central vertex. Figure~\ref{fig:blockstarlikeB} shows $S_{3,4,1,1,1} \in \mathcal{A}$ with $v$ its central vertex.
\end{remark}

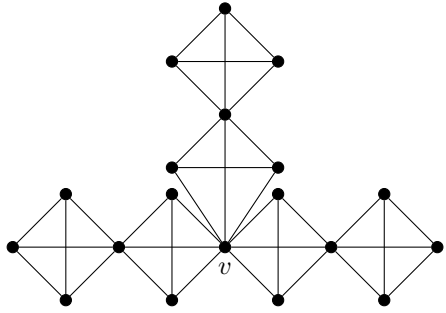
\begin{figure}[H]
	\centering
	\begin{tikzpicture} [scale=2.0,auto=left]
                       %[>=latex',join=bevel,scale=2pt]
	\tikzstyle{selected edge} = [draw,line width=1pt,->,red!30]
	\node (1) at (-40bp,0bp) [draw,circle,inner sep=1.5pt,fill=black!100,label=90:{}] {};
	\node (2) at (-30bp,10bp) [draw,circle,inner sep=1.5pt,fill=black!100,label=90:{}] {};
	\node (3) at (-30bp,-10bp) [draw,circle,inner sep=1.5pt,fill=black!100,label=90:{}] {};
	\node (4) at (-20bp,0bp) [draw,circle,inner sep=1.5pt,fill=black!100,label=270:{}] {};
	\node (5) at (-10bp,10bp) [draw,circle,inner sep=1.5pt,fill=black!100,label=270:{}] {};
	\node (6) at (-10bp,-10bp) [draw,circle,inner sep=1.5pt,fill=black!100,label=90:{}] {};
	\node (7) at (0bp,45bp) [draw,circle,inner sep=1.5pt,fill=black!100,label=90:{}] {};
	\node (8) at (-10bp,35bp) [draw,circle,inner sep=1.5pt,fill=black!100,label=90:{}] {};
	\node (9) at (10bp,35bp) [draw,circle,inner sep=1.5pt,fill=black!100,label=270:{}] {};
	\node (10) at (0bp,25bp) [draw,circle,inner sep=1.5pt,fill=black!100,label=270:{}] {};
	\node (11) at (-10bp,15bp) [draw,circle,inner sep=1.5pt,fill=black!100,label=90:{}] {};
	\node (12) at (10bp,15bp) [draw,circle,inner sep=1.5pt,fill=black!100,label=90:{}] {};
	\node (13) at (40bp,0bp) [draw,circle,inner sep=1.5pt,fill=black!100,label=90:{}] {};
	\node (14) at (30bp,10bp) [draw,circle,inner sep=1.5pt,fill=black!100,label=270:{}] {};
	\node (15) at (30bp,-10bp) [draw,circle,inner sep=1.5pt,fill=black!100,label=270:{}] {};
	\node (16) at (20bp,0bp) [draw,circle,inner sep=1.5pt,fill=black!100,label=90:{}] {};
	\node (17) at (10bp,-10bp) [draw,circle,inner sep=1.5pt,fill=black!100,label=90:{}] {};
	\node (18) at (10bp,10bp) [draw,circle,inner sep=1.5pt,fill=black!100,label=90:{}] {};
	\node (19) at (0bp,0bp) [draw,circle,inner sep=1.5pt,fill=black!100,label=270:{$v$}] {};

% arestas
	\draw [] (1) -- node {} (2);
	\draw [] (2) -- node {} (3);
	\draw [] (3) -- node {} (4);
	\draw [] (4) -- node {} (1);
	\draw [] (4) -- node {} (2);
	\draw [] (1) -- node {} (3);
	
	\draw [] (4) -- node {} (5);
	\draw [] (5) -- node {} (6);
	\draw [] (6) -- node {} (19);
	\draw [] (19) -- node {} (4);
	\draw [] (19) -- node {} (5);
	\draw [] (4) -- node {} (6);
	
	\draw [] (7) -- node {} (8);
	\draw [] (8) -- node {} (10);
	\draw [] (10) -- node {} (9);
	\draw [] (9) -- node {} (7);
	\draw [] (7) -- node {} (10);
	\draw [] (8) -- node {} (9);
	
	\draw [] (10) -- node {} (11);
	\draw [] (11) -- node {} (19);
	\draw [] (12) -- node {} (19);
	\draw [] (10) -- node {} (12);
	\draw [] (11) -- node {} (12);
	\draw [] (10) -- node {} (19);
	
	\draw [] (13) -- node {} (14);
	\draw [] (14) -- node {} (16);
	\draw [] (16) -- node {} (15);
	\draw [] (15) -- node {} (13);
	\draw [] (14) -- node {} (15);
	\draw [] (16) -- node {} (13);
	
	\draw [] (16) -- node {} (17);
	\draw [] (17) -- node {} (19);
	\draw [] (19) -- node {} (18);
	\draw [] (18) -- node {} (16);
	\draw [] (16) -- node {} (19);
	\draw [] (17) -- node {} (18);
	
	\end{tikzpicture}
	\caption{$S_{3,4,1,1,1} \in \mathcal{A}$}
        \label{fig:blockstarlikeB}
\end{figure}

\begin{theorem}\label{starlikeA}
Let $G = S_{r,k,p_1,...,p_r} \in \mathcal{A}$. If $p_2+p_3 +1 \geq p_1$ and $p_1 > p_2$ then Case A of Theorem \ref{AB} holds for $G$. 
\end{theorem}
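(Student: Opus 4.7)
The plan is to apply Corollary \ref{kirkAB}: Case A of Theorem \ref{AB} holds for $G$ if and only if every vertex of $G$ admits a unique Perron component. I would therefore verify uniqueness at each vertex of $G = S_{r,k,p_1,\ldots,p_r}$, grouping vertices into three types: (a) non-articulation vertices, (b) the central vertex $v$, and (c) the remaining articulation points, i.e.\ those lying in one of the branches $G_{k,p_i}$.

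Type (a) is immediate since $G \setminus u$ is connected. For type (b), the components of $G \setminus v$ are the $r$ subgraphs $G_{k,p_i} \setminus v$; relying on the standard monotonicity that $\rho(L(C)^{-1})$ grows with $C$ (an $M$-matrix fact: $L(C)$ is an $M$-matrix, $L(C)^{-1}$ is entrywise nonnegative, and enlarging $C$ produces an entrywise larger inverse), the hypothesis $p_1 > p_2 \geq \cdots \geq p_r$ singles out branch $1$ as the unique Perron component at $v$.

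For type (c), write $u$ for the $j$-th articulation point of branch $i$ counted from $v$, with $1 \leq j \leq p_i$. Then $G \setminus u$ has two components: an \emph{outer} piece $C_{out}$ (the part of branch $i$ farther from $v$ than $u$) and an \emph{inner} piece $C_{in}$ (containing $v$, the portion of branch $i$ between $v$ and $u$, and all other branches). When $i \geq 2$, the whole of branch $1$ sits inside $C_{in}$, and $p_1 \geq p_i > p_i - j$ already implies dominance of $C_{in}$ by monotonicity, giving a unique Perron component at $u$.

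The main obstacle is the case $i = 1$, where $C_{out}$ can be as long as a block-path with $p_1 - 1$ articulation points while $C_{in}$ contains only a short prefix of branch $1$. The hypothesis $p_2 + p_3 + 1 \geq p_1$ enters precisely here: inside $C_{in}$, gluing branches $2$ and $3$ through $v$ produces a block-path with $p_2 + p_3 + 1$ articulation points (counting $v$), hence at least $p_1$, which strictly exceeds the $p_1 - j \leq p_1 - 1$ articulation points of $C_{out}$. Monotonicity then yields $\rho(L(C_{out})^{-1}) < \rho(L(C_{in})^{-1})$, giving uniqueness in this case as well. I expect the technical difficulty to lie in making this final chain of Perron-value comparisons strict while tracking the extra $+1$ on the diagonals of $L(C)$ at vertices of $C$ formerly adjacent to $u$ or to $v$, since $L(C)$ is the principal submatrix of $L(G)$ rather than the Laplacian of the induced subgraph $G[C]$; the ``$+1$'' appearing in the hypothesis reflects exactly the contribution of $v$ to the joined path through branches $2$ and $3$.
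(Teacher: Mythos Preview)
Your proposal is correct and follows essentially the same route as the paper's proof: both invoke Corollary~\ref{kirkAB}, dispose of non-articulation vertices trivially, use $p_1>p_2$ together with monotonicity of $\rho(L(\cdot)^{-1})$ under inclusion to get uniqueness at the central vertex, and for an articulation point on branch $i$ compare the outer block-path piece against a longer block-path sitting inside the inner component (branch~$1$ when $i\geq 2$, the concatenation of branches $2$ and $3$ through $v$ when $i=1$, which is exactly where $p_2+p_3+1\geq p_1$ is used). Your write-up is in fact slightly more explicit than the paper's in two respects: you parametrize the articulation point by its position $j$ along the branch rather than splitting off the ``endpoint'' case separately, and you flag the subtlety that $L(C)$ is the principal submatrix of $L(G)$ (so the diagonal carries extra $+1$'s at vertices formerly adjacent to the removed cut vertex), a point the paper leaves implicit.
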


\begin{proof}
We will prove that for every vertex in $G$ there is only one Perron component. Then Corollary \ref{kirkAB} holds for $G$ in the condition of the hypothesis.
Observe that is sufficient to consider points of articulations of $G$.
Let $L=L(G)$ and  $v$ be the central vertex of $G$, then the connected components of $G  \backslash v$ are $G_{k,p_1}  \backslash v$, $G_{k,p_2}  \backslash v$, \ldots, $G_{k,p_r}  \backslash v$ and for each $ 2 \leq i \leq r$, $G_{k,p_i}  \backslash v$ is isomorphic to a proper subgraph of $G_{k,p_1}  \backslash v$, and $\rho(L(G_{k, p_1}  \backslash v)^{-1}) >\rho(L(G_{k, p_i}  \backslash v)^{-1}) $. 
So, $\rho(L_v^{-1}) = \rho(L(G_{k, p_1}  \backslash v)^{-1})$  and  $G_{k,p_1}  \backslash v$ is the unique Perron component of $G  \backslash v$.\newline
Let $w \in V$,  $w \not = v$. We consider two cases:\newline 
1. $G \backslash w = K_{k-1} \cup   G' \backslash w$ with $G'$ a block starlike, and \newline 
2. $G \backslash w =  G_{k,a} \backslash w \cup H \backslash w$, where $H$ is a block starlike containing as subgraph a block path graph with $p_2 + p_3 + 1$ articulation points  and exists $1 \leq i \leq 3$ such that $1 < a \leq p_i -1$. \newline
When case 1 holds there is only one Perron component of $G$ in $w$. \newline
On the other hand, if case 2 holds, consider the following situations: \newline 
If $i=1,  p_1 -1 \geq a$. Since by hypothesis $p_2+p_3 +1 \geq p_1$, we can conclude that $p_2 + p_3 \geq a$. Moreover, as  $H$ contains at least 2 connected components of $G  \backslash w$ and $p_2 + p_3 \geq a$ then  $G_{k,a}$ is isomorphic to a subgraph of $H$. 

If $i \not =1$, since $p_i -1 \geq a$ then $a \leq p_i - 1 < p_1 $ and then  $G_{k,a}$ is isomorphic to a subgraph of $H$.

Thus, by Corollary \ref{kirkAB} the theorem is true.\newline
\qed
\end{proof}

Grone and Merris (\cite{GM87}) showed that the algebraic connectivity of a Type 1 tree $T$ does not change when a degree one vertex adjacent to the characteristic vertex is added to $T$. The next theorem generalizes that result for graphs  $G_{k,p} \in  \mathcal{B}$ with $p$ an odd number. \newline
 
Observe that if $p$ is odd, then  $center(G_{k,p}) = u$ where $u$ is the point of articulation labeled  $(\frac{p+1}{2} k - \frac{p+1}{2} + 1)$  as in proof of  Theorem \ref{pathB}.\newline

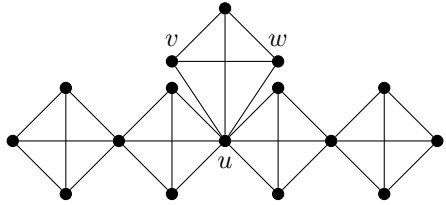
\begin{figure}[H]
	\centering
	\begin{tikzpicture} [scale=2.0,auto=left]
                       %[>=latex',join=bevel,scale=2pt]
	\tikzstyle{selected edge} = [draw,line width=1pt,->,red!30]
	\node (1) at (-40bp,0bp) [draw,circle,inner sep=1.5pt,fill=black!100,label=90:{}] {};
	\node (2) at (-30bp,10bp) [draw,circle,inner sep=1.5pt,fill=black!100,label=90:{}] {};
	\node (3) at (-30bp,-10bp) [draw,circle,inner sep=1.5pt,fill=black!100,label=90:{}] {};
	\node (4) at (-20bp,0bp) [draw,circle,inner sep=1.5pt,fill=black!100,label=270:{}] {};
	\node (5) at (-10bp,10bp) [draw,circle,inner sep=1.5pt,fill=black!100,label=270:{}] {};
	\node (6) at (-10bp,-10bp) [draw,circle,inner sep=1.5pt,fill=black!100,label=90:{}] {};
	%\node (7) at (0bp,45bp) [draw,circle,inner sep=1.5pt,fill=black!100,label=90:{}] {};
	%\node (8) at (-10bp,35bp) [draw,circle,inner sep=1.5pt,fill=black!100,label=90:{}] {};
	%\node (9) at (10bp,35bp) [draw,circle,inner sep=1.5pt,fill=black!100,label=270:{}] {};
	\node (10) at (0bp,25bp) [draw,circle,inner sep=1.5pt,fill=black!100,label=270:{}] {};
	\node (11) at (-10bp,15bp) [draw,circle,inner sep=1.5pt,fill=black!100,label=90:{$v$}] {};
	\node (12) at (10bp,15bp) [draw,circle,inner sep=1.5pt,fill=black!100,label=90:{$w$}] {};
	\node (13) at (40bp,0bp) [draw,circle,inner sep=1.5pt,fill=black!100,label=90:{}] {};
	\node (14) at (30bp,10bp) [draw,circle,inner sep=1.5pt,fill=black!100,label=270:{}] {};
	\node (15) at (30bp,-10bp) [draw,circle,inner sep=1.5pt,fill=black!100,label=270:{}] {};
	\node (16) at (20bp,0bp) [draw,circle,inner sep=1.5pt,fill=black!100,label=90:{}] {};
	\node (17) at (10bp,-10bp) [draw,circle,inner sep=1.5pt,fill=black!100,label=90:{}] {};
	\node (18) at (10bp,10bp) [draw,circle,inner sep=1.5pt,fill=black!100,label=90:{}] {};
	\node (19) at (0bp,0bp) [draw,circle,inner sep=1.5pt,fill=black!100,label=270:{$u$}] {};

% arestas
	\draw [] (1) -- node {} (2);
	\draw [] (2) -- node {} (3);
	\draw [] (3) -- node {} (4);
	\draw [] (4) -- node {} (1);
	\draw [] (4) -- node {} (2);
	\draw [] (1) -- node {} (3);
	
	\draw [] (4) -- node {} (5);
	\draw [] (5) -- node {} (6);
	\draw [] (6) -- node {} (19);
	\draw [] (19) -- node {} (4);
	\draw [] (19) -- node {} (5);
	\draw [] (4) -- node {} (6);
	
	%\draw [] (7) -- node {} (8);
	%\draw [] (8) -- node {} (10);
	%\draw [] (10) -- node {} (9);
	%\draw [] (9) -- node {} (7);
	%\draw [] (7) -- node {} (10);
	%\draw [] (8) -- node {} (9);
	
	\draw [] (10) -- node {} (11);
	\draw [] (11) -- node {} (19);
	\draw [] (12) -- node {} (19);
	\draw [] (10) -- node {} (12);
	\draw [] (11) -- node {} (12);
	\draw [] (10) -- node {} (19);
	
	\draw [] (13) -- node {} (14);
	\draw [] (14) -- node {} (16);
	\draw [] (16) -- node {} (15);
	\draw [] (15) -- node {} (13);
	\draw [] (14) -- node {} (15);
	\draw [] (16) -- node {} (13);
	
	\draw [] (16) -- node {} (17);
	\draw [] (17) -- node {} (19);
	\draw [] (19) -- node {} (18);
	\draw [] (18) -- node {} (16);
	\draw [] (16) -- node {} (19);
	\draw [] (17) -- node {} (18);
	
	\end{tikzpicture}
	\caption{$G'$, obtained from $G=G_{4,3}$ and $G_{4,0}$, with $\lambda_2(G)=\lambda_2(G') = 0.32938$}
        \label{fig:exteo36}
\end{figure}

\begin{theorem}\label{pathGM}
Let $G = G_{k,p} \in \mathcal{B}$ with $n$ vertices and $p$ an odd number. Let  $\mathbf{y}$ be a Fiedler vector of $G$ and $u\in V(G)$ the unique vertex in $center(G)$. Let $G'$ be the graph obtained as the coalescence between vertex $u$ of $G$ with a vertex of $G_{k,0}$. Let  $\mathbf{y}'$ be a  $(n+k-1)$-dimensional vector such that $y'_v = y_v, \forall v \in V (G)$ and $y'_v = 0$ otherwise. Then  $G' \in \mathcal{A}$, $\mathbf{y}'$ is a Fiedler vector of $G'$ and $\lambda_2(G') = \lambda_2(G)$.
\end{theorem}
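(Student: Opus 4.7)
The structural claim $G' \in \mathcal{A}$ is immediate: the coalescence identifies $u$ with a vertex of a new $K_k$, so $u$ remains an articulation point of $G'$, and $G' \setminus u$ consists of three components, namely the two halves of $G \setminus u$ together with the $K_{k-1}$ from the attached clique. Hence $G' = S_{3,k,(p-1)/2,(p-1)/2,0}$. The rest of the argument applies Theorem \ref{kirkB} at $u$ in $G'$ by comparing the Perron values of these three components.

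Since $p$ is odd, Theorem \ref{pathB} gives that Case B of Theorem \ref{AB} holds for $G$ at $u$, so by Theorem \ref{kirkB} the two (isomorphic) halves of $G\setminus u$ are Perron components of $G$ at $u$ with common Perron value $1/\lambda_2(G)$. Their principal Laplacian submatrices are unchanged when $G$ is replaced by $G'$, because the only new vertices (those of the attached clique) meet the rest of $G'$ only at $u$, and $u$ lies in neither half; hence these two components still have Perron value $1/\lambda_2(G)$ in $G'$. For the third component, the $k-1$ non-$u$ vertices of the new clique each have degree $k-1$ in $G'$, so the corresponding principal submatrix of $L(G')$ equals $kI_{k-1}-J_{k-1}$, whose spectrum is $\{1,k,\ldots,k\}$; its Perron value is therefore $1$. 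Fiedler's inequality $\lambda_2(G)\leq\kappa(G)=1$, valid because $u$ is a cutpoint of $G$, yields $1/\lambda_2(G)\geq 1$, so the two original halves are Perron components of $G' \setminus u$ (with the new arm joining them as a third Perron component only in the degenerate case $\lambda_2(G)=1$).

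Applying Theorem \ref{kirkB} at $u$ in $G'$, we conclude that Case B of Theorem \ref{AB} holds for $G'$, that $\lambda_2(G')=1/\rho(L(C)^{-1})=\lambda_2(G)$ for any Perron component $C$, and that every Fiedler vector of $G'$ is supported on the Perron components and vanishes at $u$ and on any non-Perron component. Since on each of the two original halves the restriction of $\mathbf{y}$ is, by Theorem \ref{kirkB} applied to $G$, already a scalar multiple of the Perron vector of that half, and since $\mathbf{y}'$ is zero at $u$ and at the $k-1$ new vertices, the extension $\mathbf{y}'$ matches the description of a Fiedler vector of $G'$, giving the remaining two claims.

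The main obstacle is the edge case $p=1$, where direct computation shows $\lambda_2(G)=1$ (for instance, the bowtie when $k=3$); here the new arm and the two halves all become Perron components and the multiplicity of $\lambda_2(G')$ jumps to two. One must verify that $\mathbf{y}'$ still belongs to the two-dimensional Fiedler eigenspace: it does, because it is precisely the basis vector $b_1$ in the notation of Theorem \ref{kirkB} that pairs the two original halves, with zero coefficient on the new arm.
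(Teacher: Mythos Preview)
Your argument is correct, and it is actually more complete than the paper's. The paper proceeds differently: it picks twin vertices $v,w$ in the new block, invokes Lemma~\ref{twins} to get equal valuations, writes the eigenvalue equation at $v$ to obtain $y_v=\frac{1}{1-\lambda_2(G')}\,y_u$, then observes that $G'\setminus u$ has two isomorphic components together with a $K_{k-1}$, cites Theorem~\ref{kirkB} to conclude Case~B holds with $y_u=0$, and deduces that the new vertices carry valuation zero. Your route bypasses Lemma~\ref{twins} and the explicit eigenvalue equation entirely, working instead with Perron values: you compute the Perron value of the new $K_{k-1}$ arm as $1$, compare it with $1/\lambda_2(G)\geq 1$ via Fiedler's bound $\lambda_2(G)\leq\kappa(G)=1$, and then read off both $\lambda_2(G')=\lambda_2(G)$ and the structure of the Fiedler eigenspace directly from Theorem~\ref{kirkB}. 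This buys you two things the paper leaves implicit: a justification that the two halves (and not the new $K_{k-1}$) are Perron components at $u$ in $G'$, and an explicit treatment of the boundary case $p=1$, where all three arms tie and the Fiedler eigenspace becomes two-dimensional but still contains $\mathbf{y}'$.
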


\begin{proof}
Let $v, w$ be a pair of vertices different from $u$ in the new block $K_{k}$ of $G'$ (as in Figure~\ref{fig:exteo36}). Since $v, w$ are twin vertices,  by Lemma \ref{twins}  $y_v = y_w$. Then 
\begin{center}
     $\sum ^n _{s=1} l_{v,s}y_{s} = (k+1) y_v - k y_v - y_u = y_v - y_u = \lambda_2(G') y_v$
     
     $ y_v = \frac{1}{1-\lambda_2(G')}y_u \quad (*)$
 \end{center}
As $p \geq 1$, $G'\backslash u$ has 3 connected components, 2 of them are isomorphic and contain at least one block $K_k$ and the other one is isomorphic to $K_{k-1}$. Then, by Theorem \ref{kirkB} case B holds for $G'$ and $y_u=0$. From equation $(*)$, we have that  $y'_v=0$ for all twin  vertices in the new block  $K_{k}$ of $G'$, and the result follows. \newline
\qed
\end{proof}

%An alternative proof for Theorem \ref{pathGM}3.5 is given next. \newline
%
%\begin{proof}
% Let 
%\begin{center}
%    $L(G') = \left[ \begin{array}{c|c}
%    L(G)_{n \times n} + k E_{u,u} & M^T \\ \hline
%    M & ((k+1)I - J)_{k \times k} 
%    \end{array} \right]$, where  
%    $M = \left[ \begin{array}{ccccc}
%     0& \ldots & -1 & \ldots & 0\\
%    \vdots &  & -1 &          & \vdots \\ 
%    0 & \ldots &  -1& \ldots & 0
%    \end{array} \right]_{k \times n}$ and 
%$E_{u,u} = $ $n \times n$ matrix with all zero elements unless $E_{u,u}=1 $ .
%\end{center}
%And 
%\begin{center}
%    $L(G') \mathbf{y}' = \left[ \begin{array}{c}
%    (L(G) +  k E_{u,u}) \mathbf{y}_{n \times 1}  \\ \hline
%    \mathbf{0}_{k-1 \times 1}
%    \end{array} \right] = 
%\lambda_2(G) \left[ \begin{array}{c}
%\mathbf{y}_{n \times 1}  \\ \hline
%    \mathbf{0}_{k-1 \times 1}
%  \end{array} \right].
%$
%\end{center}
%Since $y_u=0$, we can conclude that $\mathbf{y'}$ is an eigenvector associated with the eigenvalue $\lambda_2(G')$. 
%And as $G'$ is a subgraph of $G$, $\lambda_2(G)$ is the algebraic connectivity of $G'$ with eigenvector $\mathbf{y'}$.\newline
%\end{proof}

\section{Conclusions}\label{conclusions}

Two subclasses of block graphs were considered in this paper: block-path and block-starlike graphs.
First, we showed some properties of Fiedler vectors associated with those graphs, such as identifying whether Cases A or B of Theorem \ref{AB} holds for block-path graphs. Then we exhibited particular conditions to have block-starlike graphs for which Case A of the same theorem holds. 
Next, we obtained the necessary conditions to mantain the value of the algebraic connectivity of a block-path graph when some vertices and edges are added. 
We plan on exploring other subclasses of block graphs in the future.

\section*{Acknowledgments}
This work was supported by CAPES, Coordena\c c\~ao de Aperfei\c coamento de Pessoal do N\'ivel Superior - C\'odigo de Financiamento 001.

\end{document}